\pgfplotsset{width=10cm,compat=1.9}
\renewcommand*{\@fnsymbol}[1]{\ensuremath{\ifcase#1\or *\or \dagger\or \ddagger\or
   \mathsection\or \mathparagraph\or \|\or **\or \dagger\dagger
   \or \ddagger\ddagger \else\@ctrerr\fi}}
\definecolor{amber}{rgb}{1.0, 0.75, 0.0}
\begin{document}
\title{Optimal MEV Extraction Using \\Absolute Commitments}

\newcommand{\PSPACE}{\textsf{PSPACE}}

\author{Daji Landis${}^1$\thanks{Supported by Flashbots Research (FRP-41) and the Ethereum Foundation (ROP-7).} \and Nikolaj I. Schwartzbach${}^2$\thanks{Supported by the European Research Council (ERC) under the European Union’s Horizon 2020 research and innovation programme (Grant agreement No. 101019547).}${}^*$}
\authorrunning{Daji Landis and Nikolaj I. Schwartzbach}
\institute{New York University \and Bocconi University}

\let\oldaddcontentsline\addcontentsline
\def\addcontentsline#1#2#3{}
\maketitle
\def\addcontentsline#1#2#3{\oldaddcontentsline{#1}{#2}{#3}}

\begin{abstract}
We propose a new, more potent attack on decentralized exchanges. This attack leverages absolute commitments, which are commitments that can condition on the strategies made by other agents. This attack allows an adversary to charge monopoly prices by committing to undercut those other miners that refuse to charge an even higher fee. This allows the miner to extract the maximum possible price from the user, potentially through side channels that evade the inefficiencies and fees usually incurred. This is considerably more efficient than the prevailing strategy of `sandwich attacks', wherein the adversary induces and profits from fluctuations in the market price to the detriment of users. The attack we propose can, in principle, be realized by the irrevocable and self-executing nature of smart contracts, which are readily available on many major blockchains. Thus, the attack could potentially be used against a decentralized exchange and could drastically reduce the utility of the affected exchange. 

\keywords{miner extractable value \and smart contracts \and commitments}
\end{abstract}

\section{Introduction}
Decentralized finance (DeFi) is a broad term for infrastructure for financial transactions built with the aim not to be dependent on the reliability or honesty of a single centralized organization \cite{zetzsche2020decentralized}. Fittingly, there is not just one central currency in the world of DeFi and the linchpin of the infrastructure is the capacity to exchange different currencies for one another, much like traditional currency exchange. There are centralized exchanges that are run by a single, centralized party, and decentralized exchanges (DEXs) that use interactive cryptographic protocols to allow for decentralized, permissionless exchange \cite{malamud2017decentralized}. Since DEXs supposedly do not have any oversight, their well-functioning is totally predicated on the soundness of their governing protocols \cite{patel2014pure}.
Decentralized exchanges (DEXs) allow for permissionless exchange of cryptocurrencies.  They are implemented on some underlying blockchain in the form of a smart contract and their  well-functioning is fully predicated on the soundness of their governing protocols \cite{patel2014pure}.
These protocols have been found to have weaknesses that allow certain agents to extract substantial value from the system \cite{flash_boys}.  These exploitations do not stem from a breach in security, but rather from flaws in the incentive structure of the protocol.  The protocol itself is not broken or disobeyed; instead, the incentives of the game lead to unintended consequences.  
When interacting with a DEX, a user proposes to exchange an amount of currency $X$ for an amount of currency $Y$. This offer is publicly announced, but is not completed immediately, and the exchange rate may change during the delay. To account for the variability in the exchange rate, the user also specifies a \emph{slippage} that sets a lower bound on the amount of $Y$ they are willing to receive. Unfortunately, this system is vulnerable to an attack wherein other users, known as \emph{flash bots}, will `sandwich' the users' transactions with their own \cite{flash_boys}. This enables the bots to change the exchange rates to their benefit, while pushing the user's transaction to the limit of their slippage allowance \cite{sandwich}. The value extracted by these flash bots is known as \emph{miner extractable value} (MEV) and it makes up a non-trivial fraction of the cost incurred by the users of DEXs.

Smart contracts are key to the implementation of DEXs, as well as other blockchain services. Once a smart contract is deployed, it cannot be altered, even by its creators. Thus, users can make commitment moves with smart contracts that lock them in to certain moves given stipulated circumstances. Furthermore, the state of all smart contracts is public, meaning that these commitments can condition on the smart contracts deployed by other agents. This allows for powerful \emph{absolute commitments} \cite{stack_attack_mathias,stack_attack_aamas}, which are known to enable powerful attacks against auctions and transaction fee mechanisms \cite{stack_attack_ecai}, as well as certain escrow mechanisms \cite{stack_attack_aamas}. In this paper, we build on this line of work and analyze a model of automated market makers (AMMs), a popular type of DEX. We ask the following natural question.

 \begin{quote}
    \em Do absolute commitments allow for more efficient miner extractable value in decentralized exchanges?
 \end{quote}

We answer this question in the affirmative (\cref{thm:main}), given the assumptions in our `popsicle game' model of AMMs. This model captures the interactions of miners, who we call vendors, and their customers.  Using our model, we analyze what happens when we extend absolute commitment capabilities to the agents. We find a powerful attack whereby the first vendor can charge monopoly prices by forcing all other vendors to commit to exorbitant prices.



In terms of practicality, we believe that the attack we propose is not feasible to deploy today, owing in part to a limited instruction set of most blockchain virtual machines. Rather, we show that agents are strongly incentivized to execute these attacks if they are at all possible. Indeed, it is our worry that future enhanced capabilities of blockchains may one day enable such attacks. It is our goal to draw attention to this problem while it still remains theoretical, in the hope that the community can find ways to mitigate the issue.

 \subsection{Related Work}

The incentives that underlie the well-functioning of blockchain processes -- in particular MEV and the subtle powers of miners and bots -- have been studied extensively.  The seminal investigation of these phenomenon by Daian et al. \cite{flash_boys} discusses the mechanisms that allow for these unfair market practices.  One of these practices is frontrunning, where proprietary knowledge of a large transaction allows a party to either buy an asset before the price goes up due to the transaction or sell it before the price goes down. Daian et al. model how the ability of miners to reorder blocks and insert their own transactions allows for complex, automatic frontrunning. However, they show that the mechanism of sandwich attacks does not result in perfect extraction by the miner.  While the profits of the miners are still substantial, some of the profit leaks into the AMM at large \cite{sandwich}. We take this inefficiency into account in our model and note that it can be circumvented using absolute commitments.

Daian et al. further discusses how arbitrage results in cooperative bidding. The authors suggest that this cooperation arises organically \cite{flash_boys}. There are other accounts, both theoretical and empirical, of instances in which unexpected cooperation might arise. The model and evidence presented in \cite{bitcoin_collusion} indicate that Bitcoin miners may be pursuing strategies other than competitive mining.  In a case with perfect competition, we would expect every block to be full, given that there are always more pending transactions than there are spots in a block.  This is not observed in practice, suggesting that strategic capacity management is being used by miners to keep fees artificially high \cite{bitcoin_collusion}.  Conversely, in \cite{jens2}, the authors show that there can be a reciprocity relationship between miners that are taking part in a mining pool. The legality and the capabilities of collusion using a blockchain are discussed in \cite{legal_take_on_collusion,legal_take_on_collusion_2}.  Together, these works show that cooperation opportunities on blockchains are significant and their existence has important implications. 
 
\section{Our Scenario: MEV and Popsicles}

In our model, we focus only on specific aspects of the complicated workings of a DEX and an MEV attack. Users submit their transaction information, including a slippage parameter, to the \emph{mempool} of all pending transactions. Miners are chosen by some process, and each miner in turn strategically fills their blocks with transactions (which can include their own), and stipulates the order in which these transactions are processed. The included transactions are then executed, as long as the slippage parameters are not violated.  At first glance, this set up seems like an sealed-bid, first price auction where the slippage is the bid, but note that the amount the miner stands to gain is not the precise slippage amount. The value of MEV to the miner varies with the size of the transaction and the slippage amount \cite{sandwich}; the most profitable transactions might not be those with the highest slippage tolerance.  Thus, we must carefully think through our model, which focuses specifically on profit incentives.
First, we note that blocks are perfect substitutes -- either the transaction is executed or not.  Only two things vary across blocks: the time and the price. The provider of the good, the block miner, cannot control the timing of their block, this is handed down to them by the protocol \cite{eth_book}.  They can, however, somewhat control their price by refusing to execute any transaction that does not meet some slippage threshold. Additionally, we note that MEV is inefficient and cannot be levied against all users in a block as it requires two transaction slots, along with accompanying fees \cite{sandwich}.  Moving forward, we parameterize this inefficiency using the parameter $0 \leq \alpha <1$, where $\alpha=0$ means that no fees or other friction is felt by the miner, and note that a direct payment from the user to the miner would circumvent these inefficiencies.  We hold open the option of such side payments, but note this will not be relevant in our `vanilla' model.

In practice, the identities of the block miners are known in advance for some constant number of blocks. Ethereum publicly selects its validators two epochs (each consisting of 32 slots) ahead of time \cite{eth_book}.  If we let $n$ be the number of known, upcoming miners, the miners and their blocks can be labeled in order $1, \ldots , n$, where $1$ is the next block. We specify that block $t$ comes at time $t$ and charges price $p_t$. We assume that the buyer has some discount factor $d \in [0,1]$, where $d=1$ indicates indifference to delay.  We define the utility of a completed, free, and timely transaction as 1. The utility experience by the buyer is therefore one minus the price, modified by their discount factor, i.e. $(1-p_t)\cdot d^{t-1}$. 

As described, our model bears resemblance to the Hotelling's linear city model \cite{hotelling1929stability}, except that firms compete across time rather than distance. The canonical example of Hotelling's model is mobile vendors (selling, say, popsicles) along a beach.  Our model is different in that our vendors are not mobile; their block time is predetermined.  We can, however, take inspiration from this motivating example and imagine our miners as vendors along a one-way boardwalk
, see \cref{fig:boardwalk} for an illustration
. The customers are uniformly dispersed along this boardwalk, which partitioned into discrete steps of unit length.  Each customer is hungry for a popsicle and each has the same discount rate $d$. If a customer, with discount rate $d<1$, is standing in front of vendor $1$, a free popsicle there will result in utility $1$, whereas a free popsicle at the next vendor will result in utility $d$. The next will be even worse with $d^2$. If vendor $1$ sets their price to $p_1 = 1-d-\varepsilon$, then they will always beat the subsequent vendor, even if that vendor offers free popsicles. 


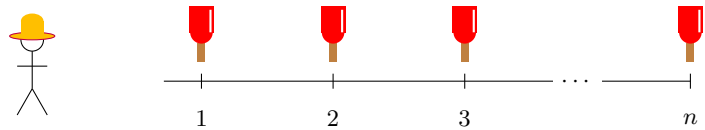
\begin{figure}
    \centering
    \begin{tikzpicture}
    \draw (0,0) -- (7,0);
    
    \begin{scope}[yshift=-0.1cm,xshift=-0.75cm]

        \draw (-1,0) -- (-1.2,-0.35);
        \draw (-1,0) -- (-0.8,-0.35);
        \draw (-1,0) -- (-1,0.5);
        \draw (-0.8,0.3) -- (-1.2,0.3);
        \draw (-1,0.65) circle [radius=0.15];
        
        \draw[fill=amber, draw=purple] (-1, 0.7) ellipse (0.3 and 0.05);
        \draw[fill=amber, draw=none] (-1, 0.9) ellipse (0.15 and 0.1);
        \draw[fill=amber, draw=none] (-1.15, 0.7) rectangle (-0.85, 0.9);

    \end{scope}
    
    \begin{scope}[xshift=0.5cm, yshift=0.5cm]
        \draw[fill=brown, draw=none ] (-0.05,-0.25) rectangle (0.05,0);
        
        \draw[fill=red, draw=none] (-0.16,0.15) rectangle (0.16,0.5);
        \draw[fill=red, draw=none] (-0.15,0.15) arc[start angle=180, end angle=360, radius=0.15];

        \draw[white, line width=0.8pt] (0.12,0.15) -- (0.12,0.45);

        \node at (0,-1) {$1$};
        \draw (0,-0.6) -- (0,-0.4);
    \end{scope}
    
    \begin{scope}[xshift=2.25cm, yshift=0.5cm]
        \draw[fill=brown, draw=none ] (-0.05,-0.25) rectangle (0.05,0);
        
        \draw[fill=red, draw=none] (-0.16,0.15) rectangle (0.16,0.5);
        \draw[fill=red, draw=none] (-0.15,0.15) arc[start angle=180, end angle=360, radius=0.15];

        \draw[white, line width=0.8pt] (0.12,0.15) -- (0.12,0.45);

        \node at (0,-1) {$2$};
        \draw (0,-0.6) -- (0,-0.4);
    \end{scope}
    
    \begin{scope}[xshift=4cm, yshift=0.5cm]
        \draw[fill=brown, draw=none ] (-0.05,-0.25) rectangle (0.05,0);
        
        \draw[fill=red, draw=none] (-0.16,0.15) rectangle (0.16,0.5);
        \draw[fill=red, draw=none] (-0.15,0.15) arc[start angle=180, end angle=360, radius=0.15];

        \draw[white, line width=0.8pt] (0.12,0.15) -- (0.12,0.45);

        \node at (0,-1) {$3$};
        \draw (0,-0.6) -- (0,-0.4);
    \end{scope}
    
    \begin{scope}[xshift=7cm, yshift=0.5cm]
        \draw[fill=brown, draw=none ] (-0.05,-0.25) rectangle (0.05,0);
        
        \draw[fill=red, draw=none] (-0.16,0.15) rectangle (0.16,0.5);
        \draw[fill=red, draw=none] (-0.15,0.15) arc[start angle=180, end angle=360, radius=0.15];

        \draw[white, line width=0.8pt] (0.12,0.15) -- (0.12,0.45);

        \node at (0,-1) {$n$};
        \draw (0,-0.6) -- (0,-0.4);
    \end{scope}

    \node[fill=white] at (5.5,0) {$\cdots$};
    \end{tikzpicture}
    \caption{Schematic illustration of the boardwalk. The popsicles  represent the blocks in which the buyer can have their transaction included, and $p_i$ is the price to the buyer for having their transaction included in the $i^{th}$ block, or in this case, the $i^{th}$ popsicle. Length along the boardwalk represents the passage of time.}
    \label{fig:boardwalk}
\end{figure}

We should take a moment to be careful with our analogy; blockchain users cannot specify which block executes their transaction. Rather, they can specify a slippage tolerance, which acts as a ceiling for the extraction they are willing to abide. If vendor $3$ were giving out free popsicles while all others charged some price $p>\varepsilon>0$, then a buyer starting at $t=1$ could not specify they wanted to be served by vendor $3$ directly.  However, they could set their slippage to $s=\varepsilon$.  Then all the vendors selling at $p$ would pass the buyer on until they arrived at vendor $3$, where they would be served.  If vendor $4$ offered the same deal, our buyer starting at $t=1$ would never reach vendor $4$, regardless of their discount rate, as they would be served first by vendor $3$. Thus, specifying slippage amounts to picking a vendor as long as that vendor has a price less than that of the preceding vendors. In practice, although miners cannot directly declare their floor prices, there are ways that the miners can signal their expectations. In particular, users will know if their transaction is never selected by a miner because they do not meet the floor price. Services, including the automatic slippage set in Uniswap, aim to keep users apprised of current market rates. As discussed in \cite{bitcoin_collusion}, capacity management can set those market rates.  We will, however, assume users can pick miners, who can broadcast their prices, as this is a straight forward model that still captures the important strategic aspects.



\subsection{Formal Model}
We now give a formal model of the popsicle game. Consider a single buyer that wants to obtain a popsicle that imparts a utility of 1. We consider a set of $n \geq 2$ vendors that we identify using the set $[n]=\{1,2,\ldots,n\}$, and assign the buyer index 0. The game starts with each vendor $i$ simultaneously deciding on a popsicle price $p_i \in [0,1]$. We assume the marginal cost for producing a popsicle is zero. The popsicles are assumed to be perfect substitutes; the buyer would be equally happy with any popsicles. We model the game as an extensive-form game of imperfect information. Here, we model games as trees, the branches of which are labeled with the index of a player. The leaves correspond to outcomes and are labeled with a utility for each player. The game is played starting at the root. At each node, the corresponding player chooses a subtree (henceforth called a \emph{subgame}) to recurse into. The procedure stops when we reach a leaf and distribute the corresponding utilities to the players. 

The \emph{popsicle game} starts with a move for vendor 1, whose action space is $[0,1]$, deciding their price $p_1$. This is followed by a move for vendor 2, who likewise chooses their price. To disallow vendor 2 conditioning their price on the price of vendor 1, all moves for vendor 2 are in the same information set. The game proceeds similarly for the remaining vendors. After the vendors have set their prices, the buyer learns the prices of all the vendors and selects a vendor. Their action space is the set $[n]$. We imagine the buyer located at $x=0$ on the real number line, with vendor $i$ located at $x=i-1$. The buyer is assumed to incur a multiplicative loss as they walk along the number line. We let $d\in[0,1]$ be their \emph{multiplicative discount factor}, which measures how much profit is conserved for each step
\footnote{In our model we have a multiplicative decay of the utility which is common in economics literature. We could consider other discount functions, e.g. linear decay $1-p_i-\kappa \cdot i\cdot d$ for some $\kappa>0$, the resulting model is qualitatively similar to the one presented in this work.}
, i.e. the buyer receives $(1-p_i)\,d^{i-1}$ if they choose vendor $i$. For a fixed pricing strategy $\{p_i\}_{i=1}^{n}$, the buyer now chooses an
$
    i^* \in \underset{{i=1\ldots n}}{\text{argmax}} \left\{ (1-p_i)\, d^{i-1} \right\}.
$
In fact, the buyer is indifferent to any convex combination of all suitable $i^*$s. We assume the popsicle vendors incur a loss of an amount $\alpha \in [0,1]$ of the surplus generated by the buyer. Vendor $i^*$ then receives a total of $(1-\alpha)\, p_i$ profit, while every $i\neq i^*$ receives a profit of zero. Finally, we allow also the buyer to make a \emph{side payment} $q\in[0,1]$ to the vendor they chose. This allows the vendor to circumvent the inefficiencies of the sandwich attack and the burden of inserting their own transactions. Thus, a strategy $S$ can be described as a tuple $S = (p_1, p_2, \ldots, p_{n};\,q;\,i^*)$.  For a given choice of  strategy profile $S$, the utility of the buyer satisfies,
\begin{equation}
    u_0(S) = \mathbb{E}\left[(1-p_{i^*} - q)\, d^{i^*-1} \right],
\end{equation}

\noindent where the randomness is taken over the random choice of pure strategies, and the utility of vendor $i$ is,
\begin{equation}
    u_i(S) = \Pr[i=i^*] \cdot \mathbb{E}\left[(1-\alpha)\, p_{i^*} + q\right].
\end{equation}
We interpret $u(S)$ as an element of $\mathbb{R}^{n+1}$ indexed from 0 to $n$. With slight overload of notation, we denote also by $S$ a mixed strategy profile, i.e. $S$ is a probability distribution on the set of all pure strategy profiles. In this case, we denote by $u(S)$ the \emph{expected} utility and thus assume all parties to be risk neutral.

A strategy profile $S$ is said to be an \emph{equilibrium} if no unilateral deviation by any of the parties (i.e. the vendors or the buyer) results in that party receiving a strictly higher expected utility. Note that, in our game $G$, the buyer observes the actions taken by the vendors, and thus makes their choice $q$ and $i^*$ knowing the pricing of all the vendors. An equilibrium is said to be \emph{subgame perfect} (or an SPE) if it is an equilibrium for every subgame of the original game. This means that $S$ is a subgame perfect equilibrium if it also an equilibrium after the prices have been set, i.e. when the buyer has to choose $q$ and $i^*$.
Thus, we may think of $G$ as a multi-leader Stackelberg game where all the vendors are leaders, and the buyer is the follower. Technically speaking $q$ and $i^*$ are functions, $q=q(\cdot)$, and $i^*=i^*(\cdot)$ that assign to each $p=(p_1,p_2,\ldots,p_n) \in \mathbb{R}^n$ a distribution on their respective actions. We say $S=(p;\,q(\cdot);\,i^*(\cdot))$ is an equilibrium if 1) for every $\overline{q}$, and $\overline{i}^*$, 
\begin{equation}
    u_0(S) \geq u_0(p;\,\overline{q}(p);\,\overline{i}^*(p)),
\end{equation}
where and 2) for each $i \in \{1,2,\ldots, n\}$ and every $\overline{p}_i$, with $p' = (p_1,p_2,\ldots p'_i,\ldots,p_n)$, 
\begin{align}
    u_i(S) \geq u_0(p';\, q(p');\, i^*(p')),
\end{align}
Note that in the vanilla setting (without contracts), by backward induction, the buyer will never effectuate the side payment. Hence, in every subgame perfect equilibrium, we have $q=0$. So we may assume for simplicity that the side payments do not exist in the vanilla setting. Thus, we may write a strategy profile as simply $(p_1, p_2, \ldots, p_n;\, i^*)$. We will now characterize the equilibria in the case without commitments.

\begin{theorem}[Vanilla Setting]\label{thm:vanilla}
    Let $S^* = (p_1,p_2,\cdots,p_n;\,i^*)$ be a subgame perfect equilibrium of the popsicle game without contracts, for $0 < \alpha < 1$ and $n\geq 2$. We can characterize $S^*$ depending on the value of $d$ as follows:
    \begin{enumerate}
        \item If $d=1$, then there is perfect competition, and the buyer always receives the popsicle at zero cost, i.e. $u_0 = 1$, and each vendor $j$ receives zero payment, i.e. $u_j=0$.
        \item If $d<1$, then vendor 1 sets their price to $p_1 = 1-d$, vendor 2 sets their price to $p_2=0$, the buyer chooses $i^*\in\{1,2\}$ arbitrarily. The buyer gets utility $u_0=d$, vendor $1$ receives $u_1 \leq \alpha\, (1-d)$, and every vendor $j \geq 2$ receives $u_j=0$.
    \end{enumerate}
\end{theorem}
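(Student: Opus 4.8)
The plan is to argue by backward induction. Since the buyer is the only player to act after the prices $p=(p_1,\dots,p_n)$ are fixed, subgame perfection forces the buyer at \emph{every} $p$ to put all mass on $\arg\max_i (1-p_i)d^{i-1}$, and (as already noted) to take $q=0$; conversely, any such tie-breaking rule is a buyer best response. Hence an SPE is exactly a price vector together with a buyer tie-breaking rule under which no vendor has a profitable unilateral deviation, and vendor $i$'s payoff at $p$ equals $\Pr[\,i\in\arg\max_k (1-p_k)d^{k-1}\,]\cdot(1-\alpha)p_i$. (The statement is phrased for pure profiles; if mixed SPE are in scope, the same conclusions follow from the standard ``no atom at the top, then undercut'' argument for Bertrand competition.)

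When $d=1$ the buyer simply buys from a cheapest vendor, so the game is Bertrand competition with $n\ge2$ sellers at zero marginal cost. Let $m=\min_i p_i$ and suppose $m>0$. If exactly one vendor charges $m$, it wins the whole market and can raise its price toward the second-lowest price while remaining strictly cheapest, strictly increasing revenue. If at least two vendors charge $m$, the buyer splits the demand among them, so some such vendor gets at most half of it and can cut to $m-\varepsilon$ to seize the whole market, strictly gaining for small $\varepsilon>0$. Both cases contradict equilibrium, so $m=0$: the buyer pays nothing, $u_0=1$, and every vendor earns $u_j=0$.

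When $d<1$ I would prove three facts. (i) \emph{Vendor $1$ can secure a payoff approaching $(1-\alpha)(1-d)$:} for $\varepsilon\in(0,1-d)$, pricing at $1-d-\varepsilon$ gives it buyer-value $d+\varepsilon>d\ge d^{j-1}\ge(1-p_j)d^{j-1}$ for every $j\ge2$, so it is the strict argmax, wins outright, and earns $(1-\alpha)(1-d-\varepsilon)$; hence $u_1\ge(1-\alpha)(1-d)$ in any SPE. (ii) \emph{Every vendor $j\ge2$ earns $u_j=0$ and $p_2=0$:} if some vendor $j\ge2$ won with probability one at a price $p_j$, vendor $1$ would be earning $0$, contradicting (i); if it won with probability strictly between $0$ and $1$ at a positive price, shaving its price slightly makes it the unique argmax, so it wins outright and strictly gains; and if $p_2>0$ then vendor $1$ would price above $1-d$ to capture the slack vendor $2$ leaves, after which some vendor $j\ge2$ can profitably undercut vendor $1$, contradicting $u_j=0$. (iii) \emph{Vendor $1$ prices at $1-d$:} by (ii) vendor $1$ wins with probability one; pricing strictly above $1-d$ drops its buyer-value below $d=(1-p_2)d$ so it loses to vendor $2$, and pricing strictly below $1-d$ leaves money on the table; since by subgame perfection the buyer is forced onto vendor $1$ whenever it is the strict argmax, vendor $1$'s unique best response against $p_2=0$ is to price exactly at $1-d$ and rely on the buyer breaking that tie in its favour --- any smaller win probability at that node would leave vendor $1$ with no best response at all, hence no SPE. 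This gives buyer-value $1-(1-d)=d$, so $u_0=d$, and reading vendor $1$'s payoff off the profile $p_1=1-d,\ p_2=0$ together with (ii) yields the stated $u_1$ and $u_j=0$ for $j\ge2$.

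The delicate step is (ii)--(iii). Vendor $1$'s best response against a free vendor $2$ is a \emph{supremum} that is attained only at the tie price $1-d$ and only under favourable tie-breaking, so one must show that every SPE is pushed into precisely this configuration, and simultaneously rule out every equilibrium in which some vendor $j\ge2$ sustains a positive winning price for itself (which requires a short case analysis over which vendor lies in the argmax and how the buyer resolves ties). The $d=1$ case and fact (i) are comparatively routine.
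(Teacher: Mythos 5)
Your proof is correct and follows the same basic undercutting (Bertrand) logic as the paper, but it is substantially more complete on the necessity direction: the paper verifies that the stated profiles are equilibria and dismisses uniqueness for $d<1$ as ``a simple calculation analogous to the first part,'' whereas your steps (i)--(iii) actually carry that calculation out, including the genuinely delicate points (ruling out a later vendor winning at a positive price, and showing that the tie at $p_1=1-d$, $p_2=0$ must be broken entirely in vendor 1's favour for vendor 1 to have a best response at all). Your $d=1$ argument is also slightly cleaner than the paper's, which only considers an undercutting deviation and, in the boundary case $\Pr[i']=\tfrac12$ with $p_{i'}=\min_i p_i$, yields only a weak rather than strict improvement; your split into ``unique cheapest raises toward the second-lowest'' versus ``tied cheapest undercuts'' avoids that.

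One thing you should not have glossed over: your step (i) proves $u_1 \ge (1-\alpha)(1-d)$ in every SPE, and step (iii) then pins down $u_1 = (1-\alpha)(1-d)$ with $\Pr[i^*=1]=1$. This is inconsistent with the theorem as literally stated, which claims $u_1 \le \alpha\,(1-d)$ (false for $\alpha<\tfrac12$ by your own lower bound; evidently $\alpha$ should read $1-\alpha$, and in fact equality holds) and that the buyer ``chooses $i^*\in\{1,2\}$ arbitrarily'' (your argument shows only $\Pr[i^*=1]=1$ survives, since any smaller winning probability gives vendor 1 a profitable slight undercut --- the same issue that makes the paper's sufficiency check for ``any convex combination'' of vendors 1 and 2 incorrect). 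Writing ``yields the stated $u_1$'' papers over a real discrepancy; you should have flagged it explicitly. The only other step worth tightening is the $p_2>0$ sub-case of (ii): as phrased (``vendor 1 would price above $1-d$\dots, after which some vendor can undercut'') it reads like two stacked deviations; the clean version is that if $V=\max_{j\ge2}(1-p_j)d^{j-1}<d$, then equilibrium forces $p_1=1-V$ and $\Pr[i^*=1]=1$, whereupon vendor 2 has a profitable unilateral deviation to any positive price below $1-V/d$.
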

\begin{proof}
We show the two cases separately.
\begin{enumerate}
    \item For $d=1$, consider the case in which all vendors choose a price of $p_i = 0$ and the buyer chooses a vendor uniformly at random. Then, the utility of the buyer is $u_0 = 1$, which is maximal, and the utility for each vendor $i$ is $u_i = 0$. Suppose that player $i$ deviates to $p_i'>0$. Then, if the buyer includes $i$ in the support of their strategy, this must result in strictly worse utility for the buyer, and they will not include $i$. As such, by deviating to $p_i'>0$, the vendor also receives zero utility because they will not be chosen, which shows the proposed strategy is an equilibrium. Conversely, consider any purported equilibrium where the buyer does not receive utility $u_0=1$. Then it must be the case that $p_{i}>0$ for every $i$. Now let $i'$ be some agent who is chosen by the buyer with probability $\leq \frac12$. Note that such an $i'$ must exist since there are least two vendors. Now consider $i'$ deviating to $p_{i'} = \frac12 \min_{i} p_i$. Observing this, the buyer deviates to selecting $i'$ with probability 1. In this case, $u_{i'} = (1-\alpha)\, p_{i'} / 2$, which, by assumption, is strictly better. This shows that the strategy considered is not an equilibrium.
    
    \item For $d<1$, the buyer strictly prefers the earlier vendors over the later ones for the same price. Once again, consider the scenario where vendor 1 sets their price at $p_1=1-d$, vendor 2 sets their price at $p_2=0$, and the remaining vendors set their prices arbitrarily. The utility of the buyer for choosing vendor 1 is $u_0 = (1-(1-d))\, d^0 = d$, while the utility for choosing vendor 2 is $u_0 = (1-0)\,3 d^1 = d$. For any other vendor $i \geq 3$, their utility to the buyer is $u_0 \leq d^2 < d$ since $d<1$. As a result, the buyer is happy to choose any convex combination of vendor 1 and vendor 2, while the inclusion of any vendor $\geq 3$ would lead to strictly less utility. Similarly, if vendor 1 increases their price, the buyer will deviate to vendor 2, while if vendor 1 decreases their price, their profit will diminish. This holds analogously for vendor 2, although they are unable to lower their price any further. This shows the proposed strategy is sufficient for an equilibrium. Necessity follows from a simple calculation analogous to the first part of the theorem. \qed
\end{enumerate}
\end{proof}

\section{A Stackelberg Attack}
In this section, we demonstrate the existence of a Stackelberg attack on the popsicle game described above. The attack allows the first vendor to extract the full value from the buyer and sell the item at full price by carefully committing to a strategy that itself depends on the commitments made by the other parties.

To proceed, we first formally define what it means for a player to commit to a strategy. We will use the model of \cite{stack_attack_mathias}, which we will now detail for self-containment. First, we fix an extensive-form representation of a game $G$. We then consider a \emph{commitment move}, which is a node with fanout 1 and is labelled with the index of a player. We denote by $C^i(G)$ the game that results from adding a commitment move for player $i$ as the new root of the game, of fanout 1, and letting its subgame be $G$. Intuitively, a contract move signifies that the corresponding player is allowed to commit to a strategy in the subgame. Formally, a commitment move is `syntactic sugar' for the larger expanded tree that encodes all possible commitments that this player can make. To model this, we compute the set of all cuts that the corresponding player can make in that subgame. Here, a `cut' for player $i$ is any subtree where some of the outgoing edges from nodes owned by $i$ are removed such that 1) there is always a path from every subgame to a leaf (labeled with a utility vector), and 2) the cut respects information sets, in the sense that the agent is not allowed to cut away a set of moves from only one node in an information set. An illustration of this procedure is shown in \cref{fig:expanded_tree}.

\begin{figure}
    \centering
    \begin{tikzpicture}[level 1/.style={sibling distance=4em}]
        \node[rectangle,draw] {$2$}
        child {node[circle,draw] {$1$}
        child {node[circle,draw] {$2$}
          child {node {$\bullet$}}
          child {node {$\bullet$}}
        }
        child {node {$\bullet$}}};
        \node at (2,-2.3) {=};
    \end{tikzpicture}
    \begin{tikzpicture}[level 1/.style={sibling distance=8em},level 2/.style={sibling distance=4em}]
        \node[circle,draw] {$2$}
        child {node[circle,draw] {$1$}
            child {node[circle,draw] {$2$}
              child {node {$\bullet$}}
              child {node {$\bullet$}}
            }
            child {node {$\bullet$}}}
        child {node[circle,draw] {$1$}
            child {node[circle,draw] {$2$}
              child {node {$\bullet$}}
              child[draw=white] {node {~}}
            }
            child {node {$\bullet$}}}
        child {node[circle,draw] {$1$}
            child {node[circle,draw] {$2$}
              child[draw=white] {node {~}}
              child {node {$\bullet$}}
            }
            child {node {$\bullet$}}};
    \end{tikzpicture}
    \caption{Computing the expanded tree $C^2(G)$ for a simple game $G$. The contract move for player 2 is depicted as a square node. To expand, we compute all cuts in the game for player 2 and attach them to a node belonging to player 2.}
    \label{fig:expanded_tree}
\end{figure}
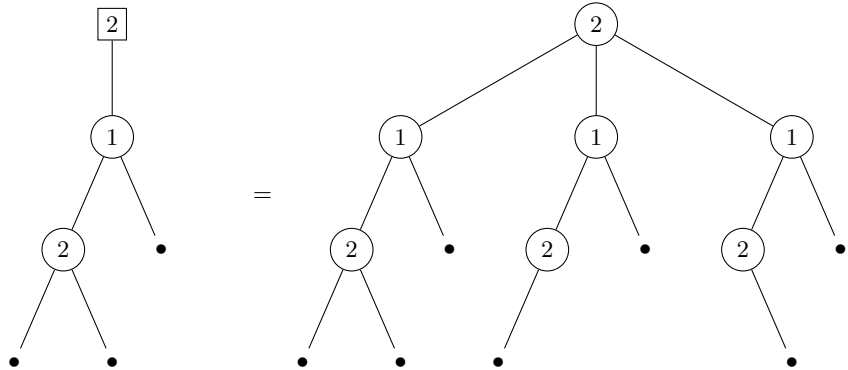

As shown in \cite{stack_attack_mathias}, the game $C^i(G)$ is the Stackelberg version of the game $G$ with the player $i$ being the leader. To account for multiple commitments, we may simple repeat this process, expanding the commitments moves in a bottom-up manner. For instance, in the game $C^1(C^2(G))$, we first expand the game with all cuts for player 2, and subsequently expand the game for player 1. Note that player 1 can cut, conditioned on the commitment made by player 2. This complex conditioning is what makes these attacks especially potent. In the literature, this is known as a \emph{reverse Stackelberg equilibrium} for the case of two commitments, though this model allows us to repeat this process \emph{ad libitum}. 
Our goal now is to show that the popsicle game is \emph{not} Stackelberg resilient. In particular, we will show that accounting for absolute commitments changes the incentives of the game and results in a different equilibrium. To formalize this, we use the definition of \cite{stack_attack_aamas}, which we state here for completeness.
\begin{definition}[Stackelberg Resilience, \cite{stack_attack_aamas}]
    Let $G$ be an $n$-player game in extensive-form
    , and let $C^i(\cdot)$ be the operator that computes the expanded tree that results from adding a commitment move to player $i$
    . We say that $G$ is \emph{Stackelberg resilient} if for  for every permutation $\pi : [n] \rightarrow [n]$, and every equilibrium $\overline{s}^*$ in $C^{\pi(1)}(C^{\pi(2)}(\cdots(C^{\pi(n)}(G)))$, there is an equilibrium $s^*$ in $G$, such that for every $i \in [n]$, it holds that,
    $
        u_i(s^*) = u_i(\overline{s}^*).
    $
\end{definition}

\noindent We are now in a position to state the main result of this work.
\begin{theorem}[Main Result]\label{thm:main}
    For a buyer with discount factor $d < 1$, $n\geq 2$ vendors, and any tax rate $0 \leq \alpha < 1$, the popsicle game is not Stackelberg resilient.
\end{theorem}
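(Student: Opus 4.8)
The plan is to exhibit, for the identity permutation $\pi$, a subgame‑perfect equilibrium of the committed game $C^{1}(C^{2}(\cdots C^{n}(G)))$ whose payoff vector is attained by no subgame‑perfect equilibrium of $G$. By \cref{thm:vanilla}, every subgame‑perfect equilibrium of $G$ gives the buyer utility exactly $u_{0}=d$, so it suffices to produce an equilibrium of the committed game with $u_{0}=0$. (Throughout I take $d>0$; the corner $d=0$ is degenerate, since then the buyer values only vendor $1$, who is already a monopolist, so there is nothing to gain.) With $\pi$ the identity, vendor $1$ is the outermost leader, so its commitment is a strategy in $C^{2}(\cdots C^{n}(G))$ and may be chosen \emph{as a function of} the commitments $c_{2},\ldots,c_{n}$ the remaining vendors will publish; this reverse‑Stackelberg conditioning is the engine of the attack.

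The attack I would use is this. Vendor $1$ commits to the reaction function $r_{1}$ with $r_{1}(c_{2},\ldots,c_{n})=1$ if every $c_{j}$ is the bare commitment ``$p_{j}=1$'' and $r_{1}(c_{2},\ldots,c_{n})=0$ otherwise; every vendor $j\ge 2$ commits to the constant ``$p_{j}=1$''; and the buyer uses the greedy sequentially‑rational strategy, namely given a realised price vector $p$ it picks some $i\in\text{argmax}_{k}\{(1-p_{k})d^{k-1}\}$ (ties broken towards vendor $1$) and sets $q=0$. In words: vendor $1$ charges the monopoly price $1$ so long as all the other vendors also commit to the worthless price $1$, but the instant any vendor refuses and commits to a cheaper popsicle, vendor $1$ drops to $0$, captures the entire market, and leaves the deviator with nothing. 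On this profile every realised price is $1$, the buyer takes vendor $1$, $q=0$, and the payoff vector is $(u_{0},u_{1},\ldots,u_{n})=(0,\,1-\alpha,\,0,\ldots,0)$, whose first coordinate is not $d$.

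The verification is a short sequence of routine checks. First, $r_{1}$ is a legitimate cut in the (infinite) expanded tree $C^{2}(\cdots C^{n}(G))$: at each of vendor $1$'s price information sets---one per published profile $(c_{2},\ldots,c_{n})$, as commitments are public---it retains the single outgoing edge of the prescribed price, which respects information sets and leaves a path to a leaf from every node. Second, the buyer's strategy is sequentially rational in every price subgame, because $q=0$ is strictly optimal and the greedy pick is optimal by definition. Third, any vendor $j\ge 2$ is, in each subgame it acts in, indifferent: $r_{1}$ always leaves $p_{1}\in\{0,1\}$, and the buyer then takes vendor $1$ in every case---strictly at $p_{1}=0$ since vendor $1$ gives utility $1>(1-p_{k})d^{k-1}$ for all $k\ge 2$ (as $d<1$), and by the tie‑break at $p_{1}=1$---so $u_{j}=0$ regardless and complying is a best response. (Alternatively, have vendor $1$ post $1-\delta$ for tiny $\delta>0$; then the buyer strictly prefers vendor $1$ even when all comply, the tie‑break is unnecessary, and $u_{0}=\delta$, which still differs from $d$ whenever $\delta\ne d$.) Fourth, vendor $1$ has no profitable deviation: in any continuation with the buyer sequentially rational we have $q=0$, $p_{1}\le 1$, and $\Pr[i^{*}=1]\le 1$, hence $u_{1}=\Pr[i^{*}=1]\cdot\mathbb{E}[(1-\alpha)p_{1}+q]\le 1-\alpha$, while the prescribed continuation after $r_{1}$ attains it; off‑path, after any other commitment of vendor $1$, continue with a subgame‑perfect equilibrium of the resulting subgame (these exist), which respects the same bound. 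Note that, exactly as in the vanilla setting, $q=0$ holds in every subgame‑perfect equilibrium---the buyer moves last and a positive $q$ only lowers its payoff---so side payments play no role, and the whole improvement over \cref{thm:vanilla} is due to the commitment‑enforced monopoly price.

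The step I expect to be genuinely delicate is not the economics---which is merely that a conditional ``punish every undercutter'' commitment makes all other vendors indifferent and hence content to post the monopoly price---but the bookkeeping inside the commitment model: verifying that $r_{1}$ is a well‑formed cut, that nesting the $n$ commitment moves creates only subgames of the handful of shapes above, and that the off‑path continuations can be taken subgame‑perfect, so the constructed profile is genuinely an SPE rather than merely a Nash equilibrium of the committed game. It is worth underlining where absolute (reverse‑Stackelberg) conditioning is indispensable: an \emph{unconditional} commitment by vendor $1$ to $p_{1}=1$ would leave every other vendor strictly better off by undercutting, which simply regenerates the vanilla equilibrium of \cref{thm:vanilla}; only the conditional threat neutralises that competition.
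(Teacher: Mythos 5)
Your construction is correct in its core and uses the same mechanism as the paper -- vendor 1's outermost absolute commitment conditions on the other vendors' commitments and threatens to undercut to $p_1=0$ any vendor who refuses to post the monopoly price, which makes every vendor $j\geq 2$ indifferent and hence willing to comply -- and your verification of the cut's well-formedness, the buyer's sequential rationality, and the $u_1\leq 1-\alpha$ deviation bound is sound. The genuine difference is that you drop the side payment: the paper's contract additionally conditions on the \emph{buyer's} commitment to $i^*=1$ and $q=1$, setting $p_1=0$ in that branch so that the entire unit of surplus flows to vendor 1 untaxed, giving $u_1=1$ rather than your $u_1=1-\alpha$. This is not just cosmetic. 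First, it is the paper's headline point (``optimal'' extraction means circumventing the $\alpha$ friction entirely, which is exactly what the side-payment channel is in the model for). Second, it closes the corner you explicitly carve out: at $d=0$ your payoff vector $(0,\,1-\alpha,\,0,\ldots,0)$ \emph{is} attainable by a vanilla SPE (the buyer is indifferent and may select vendor 1 at price $1$), so your separation argument via $u_0=0\neq d$ collapses there, whereas the paper's equilibrium gives vendor 1 utility $1>1-\alpha$ for any $\alpha>0$ and so still separates. Since the theorem is stated for all $d<1$, your ``$d=0$ is degenerate, nothing to gain'' remark is a real (if small) gap -- there \emph{is} something to gain at $d=0$, namely the tax $\alpha$, and the side payment is how you gain it. A minor further mismatch: the resilience definition gives every player a commitment move, so the buyer should appear (last) in the chain as in the paper's order $1\rightarrow 2\rightarrow\cdots\rightarrow n\rightarrow 0$; this is harmless for your argument since a trivial commitment for the buyer changes nothing, but you should say so. For $d\in(0,1)$ your proof stands as a slightly more elementary variant that never needs the side-payment machinery.
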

\begin{proof}
    To show this statement, we proceed by construction. We have to show the existence of an equilibrium with contracts that does not exist in the vanilla case. Suppose the agents have contracts in the order $1 \rightarrow 2 \rightarrow \cdots \rightarrow n \rightarrow 0$ (recall that 0 is the buyer), and consider the expanded tree as defined above. For a subgame, after the contracts have been deployed, we say a vendor $j$ has \emph{committed} to a price of $p_j^*$ if letting $p_j=p_j^*$ is an SPE in that subgame. Now consider the following contract for vendor 1.
    \begin{quote}
        \textbf{if} there is a vendor $j$ that commits to a price of $p_j \neq 1$ \textbf{then}
        \begin{quote}
            commit to a price of $p_1 = 0$
        \end{quote}
        \textbf{else if} the buyer commits to $i^*=1$ and a side payment of $q=1$ \textbf{then}
        \begin{quote}
            commit to a price of $p_1=0$
        \end{quote}
        \textbf{else}
        \begin{quote}
            commit to a price of $p_1=1$
        \end{quote}
        \textbf{end}
    \end{quote}

   We claim that it is an equilibrium for vendor 1 to deploy this contract, and for all the other agents to comply with the contract, i.e. each other vendor $j$ lets $p_j = 0$ in every subgame, and the buyer chooses $i^*=1$ and $q=1$ every time. Here, vendor 1 gets $u_1=1$ utility, while every other agent $j\neq 1$ receives $u_j=0$ utility. From the perspective of vendor 1, the obtained utility of $u_1=1$ is optimal, so no deviation can be strictly better. For vendor $j \geq 2$, if they comply and commit to a high price of $p_j = 1$, the buyer will receive zero utility by deviating to $i^*=j$. If instead we have $p_j<1$, vendor 1 will automatically set their price at $p_1=0$.  In this case, by backwards induction the buyer must choose $i^*=1$, and so vendor $j$ receives zero utility regardless of their strategy. For the buyer, if they commit to $i^*=1$ and a side payment of $q=1$, they will obtain zero utility. On the other hand, if they deviate to any $i^* = j> 1$, they will also receive zero utility, since each such vendor has $p_j=1$. Thus, no deviation for the buyer can increase their utility. This shows that vendor 1 deploying the contract is indeed an equilibrium.  Furthermore, by \cref{thm:vanilla}, the purported strategy is not an equilibrium in the vanilla case. \qed
\end{proof}

We have shown the existence of a different equilibrium. Note that the first vendor is well positioned to sweeten the deal if we want an even more insidious attack.  The buyer's commitment can be made more attractive if the vendor promises a price of $q=1-\varepsilon$, where $\varepsilon$ is an arbitrary amount of currency.  Similarly, vendor 1 could spread some of their earnings down the line.  For example, they could commit to sending $\varepsilon$ to later vendors if they comply. This would clearly break down if there were too many vendors that needed buying off.  Given that smart contracts have the capacity to call randomness using a randomness beacon (in a computational sense), one of the committed vendors could be chosen at random to receive the bribe, thereby raising the expected utility of all strictly above the utility $0$ of not committing. 

\bibliographystyle{splncs04}
\bibliography{refs.bib}
\end{document}